\documentclass{article}

\usepackage{cite}
\usepackage{amsmath,amssymb,amsfonts}
\usepackage{algorithmic}
\usepackage{graphicx, subcaption}
\usepackage{textcomp}
\usepackage{pgf}
\usepackage{bm}

\def\BibTeX{{\rm B\kern-.05em{\sc i\kern-.025em b}\kern-.08em
		T\kern-.1667em\lower.7ex\hbox{E}\kern-.125emX}}

 
\usepackage{amsthm}
\newtheorem{definition}{Definition}
\newtheorem{problem}{Problem}
\newtheorem{lemma}{Lemma}
\newtheorem{theorem}{Theorem}
\newtheorem{corollary}{Corollary}

\DeclareMathOperator*{\trace}{tr}
\DeclareMathOperator*{\minimize}{minimize}
\DeclareMathOperator*{\subject}{subject\, to}
\DeclareUnicodeCharacter{2212}{-}

\providecommand{\keywords}[1]
{
	\small	
	\textbf{\textit{Keywords---}} #1
}

\begin{document}
	
	\title{Fusion of Distance Measurements between Agents with Unknown Correlations}
	\author{Colin Cros, Pierre-Olivier Amblard,\\
		Christophe Prieur and Jean-François Da Rocha
		\thanks{C. Cros, P.-O. Amblard and C. Prieur are with the CNRS, Univ. Grenoble Alpes, GIPSA-lab, F-38000 Grenoble, France.
			{\tt\small colin.cros@gipsa-lab.fr,
				christophe.prieur@gipsa-lab.fr,
				pierre-olivier.amblard@cnrs.fr}.}
		\thanks{C. Cros and J.-F. Da Rocha are with Telespazio FRANCE, F-31100 Toulouse, France.
			{\tt\small jeanfrancois.darocha@telespazio.com}.}
	}
	\date{}	
	
	\maketitle
	
	\begin{abstract}
		Cooperative localization is a promising solution to improve the accuracy and overcome the shortcomings of GNSS. Cooperation is often achieved by measuring the distance between users. To optimally integrate a distance measurement between two users into a navigation filter, the correlation between the errors of their estimates must be known. Unfortunately, in large scale networks the agents cannot compute these correlations and must use consistent filters. A consistent filter provides an upper bound on the covariance of the error of the estimator taking into account all the possible correlations. In this paper, a consistent linear filter for integrating a distance measurement is derived using Split Covariance Intersection. Its analysis shows that a distance measurement between two agents can only benefit one of them, {i.e.}, only one of the two can use the distance measurement to improve its estimator. Furthermore, in some cases, none can. A necessary condition for an agent to benefit from the measurement is given for a general class of objective functions. When the objective function is the trace or the determinant, necessary and sufficient conditions are given.
	\end{abstract}
	
	\keywords{
		Filtering, sensor network, cooperative control.
	}
	
	\section{Introduction}
	
	{Accurate} positioning is a key challenge for numerous strategic applications.  Global Navigation Satellite Systems  (GNSSs) provide a low-cost and effective solution for achieving satisfactory accuracy in open-sky environments. However, in GNSS-denied environments such as urban canyons, indoors or underwater, other methods must be used. Cooperative localization is a promising alternative because it can slow down the loss of accuracy in the absence of GNSS signal \cite{roumeliotis2003analysis}. A simple form of cooperation is the measurement of distances between the users, hereafter called \emph{agents}. Inter-agent distances can be inexpensively determined by measuring the Received Signal Strength or the Time-of-Flight of a signal. For example, cooperative localization has been used for terrestrial vehicles \cite{hossain2016cooperative} and UAVs \cite{chen2020distributed, causa2021improving}. Recent research has demonstrated the effectiveness of cooperation in significantly improving positioning accuracy while using simple algorithms \cite{minetto2019trade}.
	
	One of the challenges in cooperative systems when integrating inter-agent distance measurements is dealing with the correlations between the errors of the agents' estimators. The use of inter-agent measurements tends to correlate these errors. If the cross-covariances are not taken into account, they lead to an underestimation of the errors and a potential divergence of the filter \cite{arambel2001covariance}. These problems can be avoided by calculating and storing the cross-covariances between each pair of estimators. In centralized systems, the estimators of all agents are stacked into a global estimator whose covariance includes all the cross-covariances. This global estimator is then updated in a centralized manner (at a computing station), or in a distributed manner \cite{roumeliotis2002distributed}. However, such solutions are difficult to implement for networks with a large number of agents due to the computational and transmission costs. Another solution to avoid correlation problems involves ensuring that no redundant information is used. In \cite{bahr2009consistent}, the authors propose to keep a table of all past interactions to ensure that only independent information is used in the filter. This requires storing the history of past interactions and may be prohibitive for large networks.
	
	In order to use two-way cooperation when some correlation information is unknown, \emph{covariance consistent} filters should be used. The literature also refers to these as ``\emph{conservative} filters'', see, {e.g.}, \cite{forsling2022conservative}. \emph{Covariance consistency} ensures that the estimator is not overconfident by providing an upperbound on the estimator error. Several methods have been developed to \emph{fuse} different estimators with unknown covariances. A recent paper \cite{forsling2022conservative} unifies some of them in a general framework called the Conservative Linear Unbiased Estimator (CLUE).
	Among these methods,  Covariance Intersection (CI) \cite{julier1997nondivergent} and its extension the Split Covariance Intersection (SCI) \cite{julier2001general} are particularly efficient: several papers, see, {e.g.}, \cite{chen2002estimation}, have described the efficiency of CI, and more recently CI has been shown to be the best method to fuse two estimators under unknown correlations \cite{reinhardt2015minimum}. Even if SCI has not been shown to be optimal, its simplicity makes it a widely used method for decentralized cooperative positioning \cite{li2013cooperative, lima2021data, pierre2018range}. The integration of distance measurements using SCI have already been considered by \cite{julier2007using, noack2011automatic} in the context of SLAM. The authors take benefit from the fact that the observation noise is assumed independent and therefore provides \emph{new} information. However, to the best of our knowledge, no specific study on the \emph{usefulness} of the integration of a distance measurement between two agents with an unknown correlation using SCI has been performed in the literature.
	
	In this paper, we consider cooperative networks where agents estimate their positions and store their own estimators. Here, cooperation occurs through distance measurements between agents. We address the problem of integrating, through linear filtering, a distance measurement between two agents who have \emph{covariance consistent} estimators of their states but the cross-covariance between these estimators is unknown. Throughout the paper, we only consider the integration of a distance measurement. In particular, the dynamics of the system is beyond the scope of this paper and is the subject of further work. Our first contribution is the derivation of the optimal SCI filter for this problem, optimal w.r.t. a given increasing cost function. In practice, this function is often the trace or the determinant of the covariance of the estimate. Our second contribution is the analysis of the SCI filter which shows whether a distance measurement can be used to improve the estimation. We provide a necessary condition in the general case, and necessary and sufficient conditions for the cases where the cost function is the trace or the determinant.
	
	The paper is organized as follows. Section~\ref{sec: Problem statement} introduces the problem of optimal filtering, then Section~\ref{sec: SCI for distance measurement} derives a candidate: the SCI filter. Section~\ref{sec: Usefulness of the filtering} investigates the usefulness of such a filter, and Section~\ref{sec: Discussion} discusses the results, in particular in simulations. Finally, Section~\ref{sec: Conclusion} provides some perspectives and future directions.

	
	\bigbreak
	\textbf{Notation.}
	In the sequel, vectors are underlined {e.g.}, $\underline{x} \in \mathbb{R}^n$, random variables are denoted in lowercase boldface letters {e.g.}, $\bm z$ for a scalar or $\underline{\pmb{x}}$ for a vector, and matrices are denoted in uppercase boldface variables {e.g.}, $\bm{M} \in \mathbb{R}^{n\times n}$. The notation $\mathrm{E}[\cdot]$ denotes the expected value of a random variable and $\left\lVert\cdot\right\rVert$ the Euclidean norm of a vector. The trace, the determinant, the inverse and the transpose of a matrix $\bm{M}$ and the identity matrix are denoted as $\trace \bm{M}$, $\left|\bm{M}\right|$, $\bm{M}^{-1}$, $\bm{M}^\intercal$ and $\bm{I}$ respectively. For two matrices $\bm{A}$ and $\bm{B}$, $\bm{A} \preceq \bm{B}$ means that the difference $\bm{B} - \bm{A}$ is positive semi-definite. A positive definite matrix $\bm{P}$ is represented in the figures by the ellipsoid $\mathcal{E}_{\bm{P}} = \left\{\underline{x} \ | \  \underline{x}^\intercal \bm{P}^{-1}\underline{x} \le 1\right\}$.

	\section{Problem statement}\label{sec: Problem statement}
	
	Let us first recall the definition of a covariance consistent estimator.
	\begin{definition}
		An estimator $(\hat{\underline{\pmb{x}}}, \bm{P})$ of a random variable $\underline{\pmb{x}}$ is said to be \emph{covariance consistent} (or shortly \emph{consistent} in the sequel) if $\mathrm{E}\left[\underline{\pmb{\tilde x}}\right] = 0$ and $\bm{\tilde P} \preceq \bm{P}$ where $\underline{\pmb{\tilde x}} = \underline{\pmb{x}} - \underline{\pmb{\hat x}}$ denotes the error and $\bm{\tilde P} = \mathrm{E}\left[\underline{\pmb{\tilde x}}\underline{\pmb{\tilde x}}^\intercal\right]$ the mean-squared error (MSE).
	\end{definition}
	
	In other words, a consistent estimator is unbiased and does not underestimate the covariance of the error.
	
	Consider two agents, denoted $A$ and $B$, characterized by their states $\underline{\pmb{x}}_A$ and $\underline{\pmb{x}}_B$ in $\mathbb{R}^n$. For the sake of simplicity and without loss of generality, the states are assumed to contain only the positions of the agents; in practice, they may also contain their orientations or their velocities, for example. The two agents have consistent estimators of their states denoted $(\underline{\pmb{\hat x}}_A, \bm{P}_A)$ and $(\underline{\pmb{\hat x}}_B, \bm{P}_B)$.
	Notice that the true (centralized) covariances of the errors, defined as,
	\begin{equation}\label{eq: Centralized covariance}
		\bm{\tilde P} 
		= \mathrm{E}\left[\begin{pmatrix} \underline{\pmb{\tilde x}}_A \\ \underline{\pmb{\tilde x}}_B \end{pmatrix}\begin{pmatrix} \underline{\pmb{\tilde x}}_A \\ \underline{\pmb{\tilde x}}_B \end{pmatrix}^\intercal\right]
		= \begin{bmatrix}
			\bm{\tilde P}_A & \bm{\tilde P}_{AB} \\ \bm{\tilde P}_{AB}^\intercal & \bm{\tilde P}_B
		\end{bmatrix},
	\end{equation} is unknown, {i.e.}, $\bm{\tilde P}_A$, $\bm{\tilde P}_B$, and $\bm{\tilde P}_{AB}$ are unknown. However, consistency restrains the set of possible $\bm{\tilde P}$ to:
	\begin{equation}
		\boldsymbol{\mathcal{P}} = \left\{\begin{bmatrix}
			\bm{\tilde P}_A & \bm{\tilde P}_{AB} \\ \bm{\tilde P}_{AB}^\intercal & \bm{\tilde P}_B
		\end{bmatrix}\succeq \bm{0} \ | \  \bm{\tilde P}_A \preceq \bm{P}_A, \bm{\tilde P}_B \preceq \bm{P}_B\right\}.
	\end{equation}
	
	Furthermore, consider $\bm{z} = \left\lVert\underline{\pmb{x}}_A - \underline{\pmb{x}}_B\right\rVert + \bm{\tilde z}$ a measurement of the distance between $A$ and $B$  where $\bm{\tilde z}$ denotes the error. The error $\bm{\tilde z}$ is assumed centered, $\mathrm{E}\left[\bm{\tilde z}\right]= 0$, with variance $\mathrm{E}\left[\bm{\tilde z}^2\right]=\sigma_m^2$, and independent of the errors of the estimators $\underline{\pmb{\tilde x}}_A$ and $\underline{\pmb{\tilde x}}_B$.
	Finally, consider that the measurement can be linearized around the means of the estimators $\underline{\hat x}_{A}$ and $\underline{\hat x}_{B}$. Introducing the unit-length director vectors  $\underline{u}_{BA} = \frac{\underline{\hat x}_A - \underline{\hat x}_B}{\left\lVert\underline{\hat x}_A - \underline{\hat x}_B\right\rVert}$, the linearized observation writes:
	\begin{equation}
		\bm{z} = \underline{u}_{BA}^\intercal \left(\underline{\pmb{x}}_A - \underline{\pmb{x}}_B\right) + \bm{\tilde z},
	\end{equation}
	where $\bm{\tilde z}$ is still assumed independent from the errors of the estimators. 
	This assumption is necessary to use SCI which is a linear fusion. It is quite idealistic and causes a loss of precision as the distance is a nonlinear function. It is reasonable if the agents are sufficiently far from each other, as the second-order terms become negligible. Without this assumption, no linear filter can be used.
	
	\medbreak
	
	The objective is to \emph{improve} the estimator of $A$ using the estimator of $B$ and the distance measurement to create a \emph{better} estimator $\left(\underline{\pmb{\hat x}}_F, \bm{P}_F\right)$ of the state of $A$. Throughout this paper, the estimators are compared w.r.t. an increasing cost function $J$ (increasing in the sense of the Loewner ordering {i.e.}, $\bm{P} \prec \bm{Q} \implies J(\bm{P}) < J(\bm{Q})$). For two consistent estimators $(\underline{\pmb{\hat x}}_1, \bm{P}_1)$ and $(\underline{\pmb{\hat x}}_2, \bm{P}_2)$ of the same random variable $\underline{\pmb{x}}$, $(\underline{\pmb{\hat x}}_1, \bm{P}_1)$ is said to be \emph{better} than $(\underline{\pmb{\hat x}}_2, \bm{P}_2)$ if $J(\bm{P}_1) < J(\bm{P}_2)$. The estimator $\left(\underline{\pmb{\hat x}}_F, \bm{P}_F\right)$ is designed as an unbiased\footnote{A bias would increase the MSE $\bm{\tilde P}_F$ in the Loewner ordering sense.} \emph{linear filter} defined as follows.
	\begin{definition}
		A \emph{linear filter} for the state of $A$ is an estimator $(\underline{\pmb{\hat x}}_F, \bm{P}_F)$ of $\underline{\pmb{x}}_A$ where $\underline{\pmb{\hat x}}_F$ is a linear combination of $\underline{\pmb{\hat x}}_A$, $\underline{\pmb{\hat x}}_B$ and $\bm{z}$. It is defined by two matrices $\bm{K}_F, \bm{L}_F$ and a vector $\underline{w}_F$ such that:
		\begin{equation}
			\underline{\pmb{\hat x}}_F = \bm{K}_F \underline{\pmb{\hat x}}_A + \bm{L}_F \underline{\pmb{\hat x}}_B + \underline{w}_F \bm{z}.
		\end{equation}
	\end{definition}
	
	To have $\underline{\pmb{\hat x}}_F$ unbiased, since $\mathrm{E}[\bm{z}] = \underline{u}_{BA}^\intercal \left(\mathrm{E}\left[\underline{\pmb{x}}_A\right] - \mathrm{E}\left[\underline{\pmb{x}}_B\right]\right)$ the gains must be dependent and satisfy:
	\begin{align}\label{eq: Relation gains}
		\bm{K}_F &= \bm{I} - \underline{w}_F \underline{u}_{BA}^\intercal ,&
		\bm{L}_F &= \underline{w}_F \underline{u}_{BA}^\intercal.
	\end{align}
	An unbiased linear filter is then only defined by $\underline{w}_F$ as:
	\begin{equation}\label{eq: Linear filter}
		\underline{\pmb{\hat x}}_F = \left(\bm{I} - \underline{w}_F\underline{u}_{BA}^\intercal\right) \underline{\pmb{\hat x}}_A + \underline{w}_F\underline{u}_{BA}^\intercal \underline{\pmb{\hat x}}_B + \underline{w}_F \bm{z}.
	\end{equation}
	For a possible $\bm{\tilde P} \in \boldsymbol{\mathcal{P}}$, the covariance of the error $\underline{\pmb{\tilde x}}_F =  \underline{\pmb{x}}_A - \underline{\pmb{\hat x}}_F$ is:
	\begin{multline}\label{eq: MSE}
		\bm{\tilde P}_F = \bm{\tilde P}_A + \left(\tilde \sigma_A^2 + \tilde \sigma_B^2 -2 \tilde \gamma + \sigma_m^2\right) \underline{w}_F \underline{w}_F^\intercal \\
		- \left(\bm{\tilde P}_A - \bm{\tilde P}_{AB}\right)\underline{u}_{BA}\underline{w}_F^\intercal  - \underline{w}_F\underline{u}_{BA}^\intercal \left(\bm{\tilde P}_A - \bm{\tilde P}_{AB}^\intercal\right)
	\end{multline}
	where
	\begin{align}
		\tilde\sigma_A^2 &= \underline{u}_{BA}^\intercal \bm{\tilde P}_A \underline{u}_{BA}, & \tilde\sigma_B^2 &= \underline{u}_{BA}^\intercal \bm{\tilde P}_B \underline{u}_{BA},\\
		\tilde \gamma &= \underline{u}_{BA}^\intercal \bm{\tilde P}_{AB} \underline{u}_{BA}.
	\end{align}
	
	If the true covariance $\bm{\tilde P}$ is known, the \emph{optimal} linear filter is calculable. It is derived from the classical Kalman Filter equations \cite{anderson1979optimal}:
	\begin{subequations}\label{eq: Optimal filter with given correlation}
		\begin{align}
			\underline{\pmb{\hat x}}_F^* &= \underline{\pmb{\hat x}}_A + \frac{\bm{z} - \underline{u}_{BA}^\intercal\left(\underline{\pmb{\hat x}}_A - \underline{\pmb{\hat x}}_B\right)}{\tilde\sigma_A^2+ \tilde\sigma_B^2 - 2\tilde\gamma + \sigma_m^2}(\bm{\tilde P}_A - \bm{\tilde P}_{AB})\underline{u}_{BA}, \\
			\bm{\tilde P}_F^* &= \bm{\tilde P}_A - \frac{(\bm{\tilde P}_A - \bm{\tilde P}_{AB})\underline{u}_{BA}\underline{u}_{BA}^\intercal (\bm{\tilde P}_A - \bm{\tilde P}_{AB}^\intercal)}{\tilde \sigma_A^2+ \tilde \sigma_B^2 - 2\tilde\gamma + \sigma_m^2}.
		\end{align}
	\end{subequations}
	In this case, $\bm{\tilde P}^*_F$ is the minimum (in the Loewner ordering) of the $\bm{\tilde P}_F$, and thus is optimal w.r.t. any increasing function $J$. 
	
	However, as $\bm{\tilde P}$ is not known, such a consideration is not possible, and the linear filter must be consistent, {i.e.}, it must satisfy $\bm{P}_F \succeq \bm{\tilde P}_F$ for every possible covariance. We are now in a position to state the main problem.
	\begin{problem}\label{pro: Main problem}
		Find a gain $\underline{w}_F$ and a covariance $\bm{P}_F$ such that the estimator $(\underline{\pmb{\hat x}}_F, \bm{P}_F)$ defined by \eqref{eq: Linear filter} is consistent and optimal w.r.t. $J$. In other words:
		\begin{equation}\tag{$\text{P}_1$}
			\left\{\begin{array}{ccc}
				\minimize\limits_{\underline{w}_F, \bm{P}_F} & J(\bm{P}_F) & \\
				\subject{} & \forall \bm{\tilde P} \in \boldsymbol{\mathcal{P}}, & \bm{P}_F \succeq \bm{\tilde P}_F
			\end{array}\right.
		\end{equation}
		where $\bm{\tilde P}_F$ is  given by \eqref{eq: MSE}.
	\end{problem}
	
	The main result of this paper is the design of a candidate filter solving Problem~\ref{pro: Main problem} using SCI and its analysis.
	
	\section{SCI for a  distance measurement}\label{sec: SCI for distance measurement}
	
	As a preliminary remark, \eqref{eq: Linear filter} can be rewritten as:
	\begin{subequations}\label{eq: Linear filter (2)}
		\begin{align}
			\underline{\pmb{\hat x}}_F &=  \underline{\pmb{\hat x}}_A + \underline{w}_F\left[\bm{z} - \underline{u}_{BA}^\intercal\left(\underline{\pmb{\hat x}}_A - \underline{\pmb{\hat x}}_B\right)\right]\label{eq: Linear filter as Kalman}\\
			&= \left(\bm{I} - \underline{w}_F \underline{u}_{BA}^\intercal\right) \underline{\pmb{\hat x}}_A + \underline{w}_F \underline{u}_{BA}^\intercal \left(\underline{\pmb{\hat x}}_B + \bm{z}\underline{u}_{BA}\right)\label{eq: Linear filter as fusion}.
		\end{align}
	\end{subequations}
	
	These two expressions highlight the equivalence between linear filtering and fusion. Equation \eqref{eq: Linear filter as Kalman} represents the usual form of the Kalman correction step: $\underline{w}_F$ is a gain and $\bm{z} - \underline{u}_{BA}^\intercal\left(\underline{\pmb{\hat x}}_A - \underline{\pmb{\hat x}}_B\right)$ is the innovation on the measurement. Equation \eqref{eq: Linear filter as fusion} on the other hand represents a fusion: the term $\underline{\pmb{\hat x}}_B + \bm{z}\underline{u}_{BA}$ is another estimator of the state of $A$. Let $\underline{\pmb{\hat x}}_{A}'$ denote this estimator. Equation \eqref{eq: Linear filter as fusion} corresponds to the fusion of $\underline{\pmb{\hat x}}_A$ and the observation of $\underline{\pmb{\hat x}}_{A}'$ through $\underline{u}_{BA}$.
	
	Since the errors $\underline{\pmb{\tilde x}}_{A}$ and $\underline{\pmb{\tilde x}}_{B}$ are correlated to an unknown degree and $\bm{\tilde z}$ is independent of $\underline{\pmb{\tilde x}}_{A}$, the fusion of $\underline{\pmb{\hat x}}_{A}$ and $\underline{\pmb{\hat x}}_{A}'$ respects the assumptions of SCI \cite{julier2001general}.
	SCI fuses two estimators using a linear combination whose weights depend on their covariances. For any $\omega\in[0,1)$, SCI provides a consistent estimator $\left(\underline{\pmb{\hat x}}_\text{SCI}(\omega), \bm{P}_\text{SCI}(\omega)\right)$ for the state of $A$, hereafter called an SCI filter.
	It is obtained by the SCI equations \cite[Eq. (12.24)-(12.25)]{julier2001general}. The covariance is:
	\begin{subequations}\label{eq: Optimal filter both}
		\begin{align}
			\bm{P}_\text{SCI}(\omega) & = \left[(1-\omega)\bm{P}_A^{-1} + \omega\frac{\underline{u}_{BA}\underline{u}_{BA}^\intercal}{\sigma_B^2 + \omega\sigma_m^2}\right]^{-1}\notag\\
			&= \frac{1}{1 - \omega}\left[\bm{P}_A - \frac{\omega \bm{P}_A \underline{u}_{BA}\underline{u}_{BA}^\intercal \bm{P}_A}{\omega \sigma_A^2 + (1-\omega)\left(\sigma_B^2 + \omega\sigma_m^2\right)} \right] \label{eq: Optimal filter covariance},
		\end{align}
		where $\sigma_A^2 = \underline{u}_{BA}^\intercal \bm{P}_A \underline{u}_{BA}$ and $\sigma_B^2 = \underline{u}_{BA}^\intercal \bm{P}_B \underline{u}_{BA}$.
		The corresponding estimator is $\underline{\pmb{\hat x}}_\text{SCI}(\omega) = \underline{\pmb{\hat x}}_F$ given by \eqref{eq: Linear filter (2)} with a gain $\underline{w}_F =\underline{w}_\text{SCI}(\omega)$:
		\begin{equation}
			\underline{w}_\text{SCI}(\omega) = \frac{\omega}{\omega\sigma_A^2 + (1-\omega)\left(\sigma_B^2 + \omega\sigma_m^2\right)}\bm{P}_A \underline{u}_{BA}. \label{eq: Optimal filter gain}
		\end{equation}
	\end{subequations}
	
	The optimal filter w.r.t. $J$ within the family of SCI filter is called the \emph{optimal SCI filter}. To find the optimal SCI filter, the parameter $\omega$ is chosen to minimize the cost function $J$:
	\begin{equation}\label{eq: Omega optimal}
		\omega^* = \underset{0 \le \omega \le 1}{\arg\min{}} J(\bm{P}_\text{SCI}(\omega)).
	\end{equation}
	We have proved the following result.
	\begin{theorem}\label{the: Optimal filter}
		The optimal SCI filter for the state of $A$ is $\left(\underline{\pmb{\hat x}}_\text{SCI}(\omega^*), \bm{P}_\text{SCI}(\omega^*)\right)$ given by \eqref{eq: Optimal filter both} where the parameter $\omega^*$ is defined by \eqref{eq: Omega optimal}.
	\end{theorem}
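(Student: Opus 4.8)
The plan is to read Theorem~\ref{the: Optimal filter} as a verification statement rather than an optimization: it asserts that the construction \eqref{eq: Optimal filter both} yields, for each $\omega\in[0,1)$, a consistent unbiased linear filter feasible for Problem~\ref{pro: Main problem}, and that the choice \eqref{eq: Omega optimal} selects the cost-minimizing member of this one-parameter family. I would therefore split the argument into three claims: (i) each $\bm{P}_\text{SCI}(\omega)$ results from a legitimate application of SCI to the fusion in \eqref{eq: Linear filter as fusion}; (ii) the resulting filter is consistent, i.e. $\bm{P}_\text{SCI}(\omega)\succeq\bm{\tilde P}_F$ for every $\bm{\tilde P}\in\boldsymbol{\mathcal P}$; and (iii) minimizing $J$ over $\omega$ returns the best SCI filter. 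The first move is to fix the fusion interpretation: in \eqref{eq: Linear filter as fusion} the second operand is the pseudo-estimator $\underline{\pmb{\hat x}}_A' = \underline{\pmb{\hat x}}_B + \bm z\,\underline{u}_{BA}$ of $\underline{\pmb{x}}_A$, which is informative only along the direction $\underline{u}_{BA}$.

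The key preliminary is to check that this fusion meets the hypotheses of SCI, namely that the error of $\underline{\pmb{\hat x}}_A'$ splits into a part correlated with $\underline{\pmb{\tilde x}}_A$ and an independent part. I would compute the scalar error $\underline{u}_{BA}^\intercal(\underline{\pmb{\hat x}}_A' - \underline{\pmb{x}}_A) = -\,\underline{u}_{BA}^\intercal\underline{\pmb{\tilde x}}_B + \bm{\tilde z}$, note that the first summand is the unknown-correlated component, whose variance is bounded by $\sigma_B^2 = \underline{u}_{BA}^\intercal\bm{P}_B\underline{u}_{BA}$, while $\bm{\tilde z}$ is the independent component with variance $\sigma_m^2$; the estimator $\underline{\pmb{\hat x}}_A$ itself contributes the purely correlated covariance $\bm{P}_A$ and no independent part. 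Feeding these split covariances into the SCI equations of \cite{julier2001general}, with CI weight $1-\omega$ on $A$ and $\omega$ on the directional observation, produces the information form in the first line of \eqref{eq: Optimal filter covariance}; a Sherman--Morrison expansion of the rank-one update then gives the closed form in the second line, and the companion mean update yields the gain \eqref{eq: Optimal filter gain}. This is where the bookkeeping is heaviest, but it is routine matrix algebra once the split is fixed.

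Consistency then follows, not by bounding the full matrix $\bm{\tilde P}_F$ of \eqref{eq: MSE} by hand, but by invoking the general SCI consistency guarantee: once the correlated/independent split is validated, SCI certifies $\bm{P}_\text{SCI}(\omega)\succeq\bm{\tilde P}_F$ for every admissible correlation, so each $\omega\in[0,1)$ gives a feasible point of Problem~\ref{pro: Main problem}. Finally, because $J$ is increasing and the family is parameterized by the single scalar $\omega$, the optimal SCI filter is by definition the one at $\omega^*$ in \eqref{eq: Omega optimal}. To close the argument cleanly I would record that $\omega\mapsto J(\bm{P}_\text{SCI}(\omega))$ is continuous on $[0,1)$ and diverges as $\omega\to 1^-$ (the prefactor $1/(1-\omega)$ blows up while the information matrix degenerates to rank one), so the minimum is attained at some $\omega^*\in[0,1)$; moreover $\omega=0$ recovers $\bm{P}_\text{SCI}(0)=\bm{P}_A$, so the optimal filter is guaranteed to satisfy $J(\bm{P}_\text{SCI}(\omega^*))\le J(\bm{P}_A)$ and the stated object is well defined.

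The part I expect to be the genuine obstacle is steps (i)--(ii): correctly identifying the correlated and independent components of $\underline{\pmb{\hat x}}_A'$ and confirming that a \emph{directional}, rank-deficient observation still fits the SCI template, so that the off-the-shelf consistency theorem applies to the full covariance $\bm{\tilde P}_F$ and not merely to its quadratic form along $\underline{u}_{BA}$. Everything downstream---the Sherman--Morrison simplification and the scalar optimization over $\omega$---is then mechanical.
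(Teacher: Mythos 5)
Your proposal is correct and takes essentially the same route as the paper: the paper's proof of Theorem~\ref{the: Optimal filter} is exactly the derivation preceding it, which rewrites the filter as the fusion \eqref{eq: Linear filter as fusion} of $\underline{\pmb{\hat x}}_A$ with $\underline{\pmb{\hat x}}_A' = \underline{\pmb{\hat x}}_B + \bm{z}\underline{u}_{BA}$, observes that the unknown correlation between $\underline{\pmb{\tilde x}}_A$ and $\underline{\pmb{\tilde x}}_B$ together with the independent noise $\bm{\tilde z}$ satisfies the SCI hypotheses, invokes the SCI equations of \cite{julier2001general} and their built-in consistency guarantee, and then minimizes $J$ over the scalar $\omega$ as in \eqref{eq: Omega optimal}. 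Your additional details (the explicit error decomposition along $\underline{u}_{BA}$, the Sherman--Morrison step linking the two lines of \eqref{eq: Optimal filter covariance}, and the attainment of the minimum since the cost blows up as $\omega \to 1^-$) merely make explicit what the paper leaves implicit.
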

	
	In the sequel, this optimal SCI filter is denoted $\left(\underline{\pmb{\hat x}}_\text{SCI}^*, \bm{P}_\text{SCI}^*\right)$. SCI has not been shown to be optimal for the fusion, therefore the optimal SCI filter $\left(\underline{\pmb{\hat x}}_\text{SCI}^*, \bm{P}_\text{SCI}^*\right)$ may be suboptimal for Problem~\ref{pro: Main problem}. However, since it is very simple to implement, it is widely used in practice. Furthermore, when the measurement is very accurate, {i.e.}, when $\sigma_m^2$ tends to $0$, SCI becomes CI which is optimal (only when $\sigma_m^2 = 0$)  \cite{reinhardt2015minimum}. These two reasons lead us to consider this filter and analyze its usefulness.

	\section{Usefulness of the filtering}\label{sec: Usefulness of the filtering}
	
	\subsection{General increasing cost function}
	
	Theorem~\ref{the: Optimal filter} gives the expression of the optimal SCI filter. Setting $\omega = 0$ in the fusion ensures that the estimator $\left(\underline{\pmb{\hat x}}_\text{SCI}^*, \bm{P}_\text{SCI}^*\right)$ is at least as good as the original estimator $\left(\underline{\pmb{\hat x}}_A, \bm{P}_A\right)$. In fact, setting $\omega = 0$ corresponds to keeping the original estimator and ignoring the measurement. However, there is no reason why there should be a better estimator, the optimal parameter \eqref{eq: Omega optimal} could be $\omega^* = 0$. In such a case, SCI filters cannot improve the estimator of $A$. We call \emph{pertinent} a linear filter for the state of $A$ that provides a better estimator than $\left(\underline{\pmb{\hat x}}_A, \bm{P}_A\right)$. This section characterizes the pertinence of SCI filters.
	\begin{definition}
		A \emph{pertinent} linear filter for the state of $A$ is a linear filter $\left(\underline{\pmb{\hat x}}_F, \bm{P}_F\right)$ such that $J({\bm{P}_F}) < J({\bm{P}_A})$.
	\end{definition}
	
	By definition of the optimal SCI filter, there is a pertinent SCI filter if and only if the optimal SCI filter is pertinent. The underlying question is therefore: Is the optimal SCI filter $\left(\underline{\pmb{\hat x}}_\text{SCI}^*, \bm{P}_\text{SCI}^*\right)$ pertinent?
	
	Let us start with two corollaries of Theorem~\ref{the: Optimal filter} that provide a necessary condition for the existence of a pertinent SCI filter.
	\begin{corollary}\label{cor: Necessary condtion}
		If $\sigma_A^2 \le \sigma_B^2$, there is not any pertinent SCI filter for the state of $A$.
	\end{corollary}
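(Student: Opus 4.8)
The plan is to show that when $\sigma_A^2 \le \sigma_B^2$ the SCI covariance never falls below $\bm{P}_A$ in the Loewner order, so that no admissible $\omega$ can lower the cost. By definition a pertinent SCI filter exists if and only if some $\omega\in[0,1)$ satisfies $J(\bm{P}_\text{SCI}(\omega)) < J(\bm{P}_A)$; since $J$ is increasing for the Loewner ordering, it suffices to establish
\begin{equation}
\bm{P}_\text{SCI}(\omega) \succeq \bm{P}_A \quad \text{for all } \omega \in [0,1). \notag
\end{equation}
This reduces the whole question to a matrix inequality that I can attack with the explicit closed form of the covariance.

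First I would subtract $\bm{P}_A$ from \eqref{eq: Optimal filter covariance} and factor out the common prefactor, obtaining
\begin{equation}
\bm{P}_\text{SCI}(\omega) - \bm{P}_A = \frac{\omega}{1-\omega}\left[\bm{P}_A - \frac{\bm{P}_A \underline{u}_{BA}\underline{u}_{BA}^\intercal \bm{P}_A}{D(\omega)}\right], \notag
\end{equation}
where $D(\omega) = \omega\sigma_A^2 + (1-\omega)\left(\sigma_B^2 + \omega\sigma_m^2\right) > 0$. For $\omega\in[0,1)$ the scalar prefactor $\omega/(1-\omega)$ is nonnegative, so the sign of the difference is controlled entirely by the bracketed rank-one perturbation of $\bm{P}_A$.

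The crux is a standard rank-one positive-semidefiniteness criterion. Applying the congruence $\bm{P}_A^{-1/2}(\cdot)\bm{P}_A^{-1/2}$, the bracket is positive semi-definite if and only if $\bm{I} - D(\omega)^{-1}\bigl(\bm{P}_A^{1/2}\underline{u}_{BA}\bigr)\bigl(\bm{P}_A^{1/2}\underline{u}_{BA}\bigr)^\intercal \succeq \bm{0}$; as this is the identity minus a rank-one matrix whose only nonzero eigenvalue equals $\left\lVert\bm{P}_A^{1/2}\underline{u}_{BA}\right\rVert^2 = \underline{u}_{BA}^\intercal \bm{P}_A\underline{u}_{BA} = \sigma_A^2$, the whole condition collapses to the single scalar inequality $D(\omega) \ge \sigma_A^2$. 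I expect this reduction from a matrix to a scalar condition to be the main step, though it is entirely routine once the factorization above is in place.

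Finally I would verify the scalar inequality. A direct computation gives
\begin{equation}
D(\omega) - \sigma_A^2 = (1-\omega)\left(\sigma_B^2 - \sigma_A^2 + \omega\sigma_m^2\right), \notag
\end{equation}
which, under the hypothesis $\sigma_A^2 \le \sigma_B^2$, is nonnegative for every $\omega\in[0,1)$ since $1-\omega>0$, $\sigma_m^2\ge 0$ and $\omega\ge 0$. Hence $\bm{P}_\text{SCI}(\omega)\succeq\bm{P}_A$ throughout, so by monotonicity of $J$ we have $J(\bm{P}_\text{SCI}(\omega))\ge J(\bm{P}_A)$ for all $\omega$; in particular $\omega^*=0$ and no SCI filter is pertinent, which is the claim.
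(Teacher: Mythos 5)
Your proof is correct, and it takes a genuinely different route from the paper's. The paper works in information form: from \eqref{eq: Optimal filter covariance} it views $\bm{P}_\text{SCI}^{-1}(\omega)$ as a convex combination of $\bm{P}_A^{-1}$ and $\bm{Q} = \underline{u}_{BA}\underline{u}_{BA}^\intercal/(\sigma_B^2+\omega\sigma_m^2)$, so that $\bm{P}_\text{SCI}(\omega)\succeq\bm{P}_A$ follows once $\bm{Q}\preceq\bm{P}_A^{-1}$; establishing that inequality is the bulk of the paper's proof and requires decomposing $\underline{v}=\alpha\underline{u}_{BA}+\underline{w}$, passing to an orthonormal basis containing $\underline{u}_{BA}$, and a Schur-complement computation of $\bm{P}_A^{-1}$ to show $\underline{v}^\intercal\bm{P}_A^{-1}\underline{v}\ge\alpha^2/\sigma_A^2$. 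You instead stay in the covariance domain: you factor $\bm{P}_\text{SCI}(\omega)-\bm{P}_A$ as $\frac{\omega}{1-\omega}$ times a rank-one downdate of $\bm{P}_A$, apply the congruence $\bm{P}_A^{-1/2}(\cdot)\bm{P}_A^{-1/2}$, and use the criterion that $\bm{I}-\underline{s}\,\underline{s}^\intercal/D\succeq\bm{0}$ if and only if $\lVert\underline{s}\rVert^2\le D$, which collapses the matrix inequality to the scalar inequality $D(\omega)\ge\sigma_A^2$, i.e. $(1-\omega)\left(\sigma_B^2-\sigma_A^2+\omega\sigma_m^2\right)\ge 0$. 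Your algebra checks out, including the factorization of the difference and the identity $\lVert\bm{P}_A^{1/2}\underline{u}_{BA}\rVert^2=\sigma_A^2$. What your route buys: it avoids the block-inversion and Schur-complement machinery entirely, isolates the single scalar quantity $D(\omega)$ that governs the comparison, and shows in passing that no improvement is possible whenever $\sigma_B^2+\omega\sigma_m^2\ge\sigma_A^2$. What the paper's route buys: the information-form viewpoint is the native parametrization of (S)CI, and the orthonormal-basis/Schur-complement construction it introduces is reused almost verbatim in the proof of Lemma~\ref{lem: Convexity of g} (the bound $0<r_A\le 1$). One shared caveat, not a gap specific to you: both arguments pass from $\bm{P}_\text{SCI}(\omega)\succeq\bm{P}_A$ to $J(\bm{P}_\text{SCI}(\omega))\ge J(\bm{P}_A)$, which uses weak monotonicity of $J$ under $\succeq$, whereas the paper formally defines increasingness only for the strict order; the paper's own proof makes exactly the same step.
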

	\begin{proof}
		Assume that $\sigma_A^2 \le \sigma_B^2$ and let $\omega \in[0,1)$. We will show that $\bm{P}_\text{SCI}(\omega) \succeq \bm{P}_A$ which is sufficient since $J$ is increasing.
		In \eqref{eq: Optimal filter covariance}, $\bm{P}_\text{SCI}^{-1}(\omega)$ is expressed as a convex combination of $\bm{P}_A^{-1}$ and $\bm{Q} = \frac{\underline{u}_{BA} \underline{u}_{BA}^\intercal}{\sigma_B^2 + \omega \sigma_m^2}$. Therefore, if $\bm{Q} \preceq \bm{P}_A^{-1}$, then $\bm{P}_\text{SCI}^{-1}(\omega) \preceq \bm{P}_A^{-1}$. Let us prove that $\bm{Q} \preceq \bm{P}_A^{-1}$, by proving that $\forall \underline{v}$, $\underline{v}^\intercal(\bm{P}_A^{-1}-\bm{Q})\underline{v} \ge 0$. Write $\underline{v} = \alpha \underline{u}_{BA} + \underline{w}$ with $\alpha \in \mathbb{R}$ and $\underline{w}^\intercal\underline{u}_{BA} =0$. Then, $\underline{v}^\intercal\bm{Q}\underline{v} = \alpha^2 / (\sigma_B^2+\omega\sigma_m^2) \le \alpha^2/\sigma_A^{2}$. Finally, let us prove that $\underline{v}^\intercal\bm{P}_A^{-1}\underline{v}\ge \alpha^2/\sigma_A^{2}$.
		
		Consider an orthonormal basis $\mathcal{B} = \left(\underline{u}_{BA}, \underline{u}_2, \dots, \underline{u}_n\right)$ containing $\underline{u}_{BA}$ and the orthogonal matrix $\bm{R} = \begin{bmatrix}\underline{u}_{BA} & \underline{u}_2 & \cdots & \underline{u}_n\end{bmatrix}$. In this basis, the covariance of the error of the estimator of $A$ and the vector $\underline{v}$ become:
		\begin{align*}
			\bm{P}_A &= \bm{R}\begin{bmatrix}
				\sigma_A^2 & \underline{b}^\intercal \\
				\underline{b} & \bm{C}
			\end{bmatrix}\bm{R}^\intercal, &
			\underline{v} &= \bm{R}\begin{pmatrix}\alpha \\ \underline{w}'\end{pmatrix}
		\end{align*}
		where $\underline{b}, \underline{w}' \in \mathbb{R}^{n-1}$ and $\bm{C}\in \mathbb{R}^{(n-1)\times(n-1)}$. Let $\bm{S} = \bm{C} - \underline{b}\underline{b}^\intercal/\sigma_A^2$ be the Schur complement of the first entry. $\bm{S}$ is invertible and $\bm{S}^{-1} \succeq \bm{0}$, see, {e.g.}, \cite[Chap. 7]{horn2012matrix}. The inverse of $\bm{P}_A$ and $\underline{v}^\intercal\bm{P}_A^{-1}\underline{v}$ become:
		\begin{align*}
			\bm{P}_A^{-1} &= \bm{R}\begin{bmatrix}
				\sigma_A^{-2} + \sigma_A^{-4}\underline{b}^\intercal\bm{S}^{-1}\underline{b} & -\sigma_A^{-2}\underline{b}^\intercal\bm{S}^{-1} \\
				-\sigma_A^{-2}\bm{S}^{-1}\underline{b} & \bm{S}^{-1}
			\end{bmatrix}\bm{R}^\intercal,\\
			\underline{v}^\intercal\bm{P}_A^{-1}\underline{v} &= \frac{\alpha^2}{\sigma_A^2} + \left(\frac{\alpha}{\sigma_A^2}\underline{b} - \underline{w}'\right)^\intercal\bm{S}^{-1}\left(\frac{\alpha}{\sigma_A^2}\underline{b} - \underline{w}'\right).
		\end{align*}
		Hence, $\underline{v}^\intercal\bm{P}_A^{-1}\underline{v} \ge \frac{\alpha^2}{\sigma_A^2}$.
	\end{proof}
	\begin{corollary}\label{cor: Only one pertinent}
		If there is a pertinent SCI filter for the state of $A$, then there is not any pertinent SCI filter for the state of $B$.
	\end{corollary}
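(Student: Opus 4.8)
The plan is to deduce this directly from Corollary~\ref{cor: Necessary condtion} together with the symmetry of the problem under exchanging the roles of $A$ and $B$. First I would observe that constructing an SCI filter for the state of $B$ is literally the same problem as for $A$ after swapping the two agents. Indeed, the relevant director vector for $B$ is $\underline{u}_{AB} = -\underline{u}_{BA}$, but the SCI covariance \eqref{eq: Optimal filter covariance} depends on the direction only through the rank-one matrix $\underline{u}_{BA}\underline{u}_{BA}^\intercal = \underline{u}_{AB}\underline{u}_{AB}^\intercal$; moreover the measurement variance $\sigma_m^2$ and the consistency constraints defining $\boldsymbol{\mathcal{P}}$ are unchanged by the swap. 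Hence the pair of scalars governing $B$'s filter is obtained from $(\sigma_A^2,\sigma_B^2)$ merely by interchanging them.

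Consequently, Corollary~\ref{cor: Necessary condtion} applied with the two agents swapped reads: if $\sigma_B^2 \le \sigma_A^2$, then there is no pertinent SCI filter for the state of $B$. I would then invoke the contrapositive of Corollary~\ref{cor: Necessary condtion} in its original form: the existence of a pertinent SCI filter for the state of $A$ forces $\sigma_A^2 > \sigma_B^2$, equivalently $\sigma_B^2 < \sigma_A^2$.

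Combining the two observations finishes the argument. Assume there is a pertinent SCI filter for the state of $A$; then $\sigma_A^2 > \sigma_B^2$, so in particular $\sigma_B^2 \le \sigma_A^2$, and the swapped form of Corollary~\ref{cor: Necessary condtion} yields that no pertinent SCI filter exists for the state of $B$, which is precisely the claim. The only point that requires genuine care is the symmetry step itself: one must verify that every quantity entering Problem~\ref{pro: Main problem} and the covariance \eqref{eq: Optimal filter covariance} is either invariant or correctly relabeled under the exchange of $A$ and $B$, so that Corollary~\ref{cor: Necessary condtion} may be reused legitimately with $\sigma_A^2$ and $\sigma_B^2$ in swapped roles. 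Once this is granted, the mutual exclusivity of the strict inequalities $\sigma_A^2 > \sigma_B^2$ and $\sigma_B^2 > \sigma_A^2$ closes the proof immediately, with no further computation needed.
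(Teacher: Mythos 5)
Your proof is correct and is essentially the paper's own argument: both rest on Corollary~\ref{cor: Necessary condtion} applied once to $A$ and once (via the $A\leftrightarrow B$ symmetry) to $B$, and on the incompatibility of the strict inequalities $\sigma_B^2 < \sigma_A^2$ and $\sigma_A^2 < \sigma_B^2$. The only difference is presentational — you verify the swap explicitly (noting that $\underline{u}_{BA}\underline{u}_{BA}^\intercal = \underline{u}_{AB}\underline{u}_{AB}^\intercal$, so only $\sigma_A^2$ and $\sigma_B^2$ exchange roles) where the paper compresses this into ``Similarly,'' which is a welcome but not substantively different elaboration.
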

	\begin{proof}
		If there is a pertinent SCI filter for the state of $A$, Corollary~\ref{cor: Necessary condtion} implies $\sigma_B^2 < \sigma_A^2$. Similarly, if there is a pertinent SCI filter for the state of $B$, then $\sigma_A^2 < \sigma_B^2$. Both inequalities cannot hold simultaneously.
	\end{proof}
	
	Corollary~\ref{cor: Necessary condtion} provides only a necessary condition for the existence of pertinent SCI filters. In the following, we extend the property to provide a necessary and sufficient condition for the two most commonly used cost functions: the trace and the determinant. It is based on the following result.
	\begin{lemma}\label{the: Condition necessary and sufficient}
		Assume that the function $f:\omega \mapsto J(\bm{P}_\text{SCI}(\omega))$, with $\bm{P}_\text{SCI}(\omega)$ given by \eqref{eq: Optimal filter covariance}, is convex on $[0, 1)$. Then, the optimal SCI filter $\left(\underline{\pmb{\hat x}}_\text{SCI}^*, \bm{P}_\text{SCI}^*\right)$ is pertinent if and only if $f'(0) < 0$.
	\end{lemma}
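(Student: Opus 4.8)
The plan is to reduce the statement to an elementary fact about a single convex function of one real variable. The first step is to observe that at $\omega = 0$ the formula \eqref{eq: Optimal filter covariance} collapses to $\bm{P}_\text{SCI}(0) = \bm{P}_A$, so that $f(0) = J(\bm{P}_A)$. Since $\omega^*$ defined in \eqref{eq: Omega optimal} minimizes $f$ and $\bm{P}_\text{SCI}^* = \bm{P}_\text{SCI}(\omega^*)$, the optimal SCI filter is pertinent exactly when $f(\omega^*) < f(0)$, i.e. when the minimum of the convex function $f$ over $[0,1)$ lies strictly below its value at the left endpoint. Everything then follows from convexity of $f$.

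For the \emph{if} direction, I would assume $f'(0) < 0$. By the definition of the (right) derivative there is some $\omega_0 \in (0,1)$ with $f(\omega_0) < f(0)$; since $\omega^*$ is a minimizer, $f(\omega^*) \le f(\omega_0) < f(0)$, so the filter is pertinent. For the \emph{only if} direction I would argue by contraposition: suppose $f'(0) \ge 0$. Convexity of $f$ makes its derivative nondecreasing on $[0,1)$, hence $f'(\omega) \ge f'(0) \ge 0$ for every $\omega$, so $f$ is nondecreasing and $f(\omega) \ge f(0)$ for all $\omega \in [0,1)$. Thus $f(\omega^*) = f(0) = J(\bm{P}_A)$ and the filter is not pertinent. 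Combining the two implications yields the claimed equivalence.

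The argument is short because all the analytic work is already encoded in the convexity hypothesis; the only points requiring care are structural rather than computational. First, one must check that $f$ is genuinely differentiable at the left endpoint: the statement itself presupposes this by writing $f'(0)$, and concretely the map $\omega \mapsto \bm{P}_\text{SCI}(\omega)$ in \eqref{eq: Optimal filter covariance} is a smooth rational matrix-valued function on $[0,1)$, so for the trace and the determinant $f'(0)$ exists and coincides with the right derivative guaranteed by convexity. Second, one should confirm that the minimum in \eqref{eq: Omega optimal} is attained inside $[0,1)$ rather than at the excluded endpoint $\omega = 1$; this is immediate since $\bm{P}_\text{SCI}(\omega)$ becomes singular as $\omega \to 1^-$ and hence $f(\omega) \to +\infty$.

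I expect the reduction $f(0) = J(\bm{P}_A)$ together with these two regularity checks to be the only genuine obstacle, the convex-analysis core being routine. In particular, no properties of $J$ beyond monotonicity and the assumed convexity of $f$ enter the argument, which is why the same proof will later specialize cleanly to the trace and the determinant.
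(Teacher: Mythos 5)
Your core argument is exactly the paper's proof: reduce pertinence to $f(\omega^*) < f(0) = J(\bm{P}_A)$, use $f'(0)<0$ to produce some $\omega_0\in(0,1)$ with $f(\omega_0)<f(0)$, and use convexity (via the nondecreasing derivative, equivalently the subgradient inequality $f(\omega)\ge f(0)+f'(0)\,\omega$) to conclude $f(\omega)\ge f(0)$ for all $\omega\in[0,1)$ when $f'(0)\ge 0$. That part is correct and needs no changes.

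One caveat on your closing regularity remark, which goes beyond what the paper does: the claim that $f(\omega)\to+\infty$ as $\omega\to 1^-$ is false in general. What degenerates at $\omega=1$ is $\bm{P}_\text{SCI}^{-1}(\omega)$, which tends to the rank-one matrix $\underline{u}_{BA}\underline{u}_{BA}^\intercal/(\sigma_B^2+\sigma_m^2)$; hence $\bm{P}_\text{SCI}(\omega)$ blows up only in the directions orthogonal to $\underline{u}_{BA}$. When the cost is insensitive to those directions—for the trace this is the case $r_A=1$, in particular dimension one—$f$ tends to a finite limit (for the trace, $\trace{\bm{P}_A}\,(\sigma_B^2+\sigma_m^2)/\sigma_A^2$), and the paper's discussion section explicitly notes that the minimizer can then sit at $\omega=1$. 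This does not invalidate your proof of the lemma, since the statement presupposes that the optimal SCI filter, hence $\omega^*$, exists, and neither of your two implications actually uses the divergence claim; but the attainment argument as you state it would fail in that edge case.
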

	\begin{proof}
		By definition, $\left(\underline{\pmb{\hat x}}_\text{SCI}^*, \bm{P}_\text{SCI}^*\right)$ is pertinent if and only if $f(\omega^*) < f(0) = J(\bm{P}_A)$. If $f'(0) < 0$, then there is a $\omega_0 > 0$ such that $f(\omega_0) < f(0)$ and therefore $f(\omega^*) < f(0)$. If $f'(0) \ge 0$, since $f$ is convex on $[0,1)$, $\forall \omega \in [0,1)$, $f(\omega) \ge f(0)$ and therefore, $f(\omega^*) = f(0)$.
	\end{proof}
	
	\subsection{Particular case of the trace}
	
	In this paragraph, consider $J(\cdot) = \trace\cdot$ and let $g : \omega \mapsto \trace{\bm{P}_\text{SCI}(\omega)}$ be the cost function to be optimized. According to \eqref{eq: Optimal filter covariance}, the cost function is:
	\begin{equation}\label{eq: cost function trace}
		g(\omega) = \frac{\trace{\bm{P}_A}}{1-\omega}\left[1 - \frac{\omega r_A \sigma_A^2}{\omega \sigma_A^2 + (1-\omega)(\sigma_B^2 + \omega \sigma_m^2)}\right]
	\end{equation}
	where:
	\begin{equation}\label{eq: Ratio}
		r_A = \frac{\frac{1}{\sigma_A^2}\left\lVert\bm{P}_A\underline{u}_{BA}\right\rVert^2}{\trace{\bm{P}_A}}.
	\end{equation}
	
	The meaning of $r_A$ is discussed in Section~\ref{sec: Discussion}. In order to apply Lemma~\ref{the: Condition necessary and sufficient}, let us prove the following.
	\begin{lemma}\label{lem: Convexity of g}
		The cost function $g$ is convex on $[0,1)$.
	\end{lemma}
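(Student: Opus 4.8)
The plan is to bypass differentiating the rational expression \eqref{eq: cost function trace} and instead work from the inverse form of $\bm{P}_\text{SCI}(\omega)$ in the first line of \eqref{eq: Optimal filter both}. Writing $g(\omega) = \trace \bm{M}(\omega)^{-1}$ with $\bm{M}(\omega) = (1-\omega)\bm{P}_A^{-1} + h(\omega)\,\underline{u}_{BA}\underline{u}_{BA}^\intercal$ and $h(\omega) = \omega/(\sigma_B^2+\omega\sigma_m^2)$, I first note that $\bm{M}(\omega)\succ\bm{0}$ on $[0,1)$, since $(1-\omega)\bm{P}_A^{-1}\succ\bm{0}$ and $h(\omega)\underline{u}_{BA}\underline{u}_{BA}^\intercal\succeq\bm{0}$; hence $g$ is well defined there. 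The proof then rests on three facts that I would establish in turn and combine at the end.

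First, the map $F:\bm{M}\mapsto\trace\bm{M}^{-1}$ is convex on the cone of positive definite matrices. This follows from the operator convexity of the matrix inverse, $(\lambda\bm{A}+(1-\lambda)\bm{B})^{-1}\preceq\lambda\bm{A}^{-1}+(1-\lambda)\bm{B}^{-1}$ for $\bm{A},\bm{B}\succ\bm{0}$ \cite{horn2012matrix}, after applying the trace, which is linear and preserves the Loewner order. Second, $F$ is nonincreasing for the Loewner order: if $\bm{0}\prec\bm{A}\preceq\bm{B}$ then $\bm{B}^{-1}\preceq\bm{A}^{-1}$, so $\trace\bm{B}^{-1}\le\trace\bm{A}^{-1}$. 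Third, $\omega\mapsto\bm{M}(\omega)$ is concave in the Loewner order on $[0,1)$: the term $(1-\omega)\bm{P}_A^{-1}$ is affine in $\omega$, while $\underline{u}_{BA}\underline{u}_{BA}^\intercal\succeq\bm{0}$ together with concavity of the scalar $h$ makes $h(\omega)\underline{u}_{BA}\underline{u}_{BA}^\intercal$ Loewner-concave. Concavity of $h$ is the only computation needed and is immediate from $h''(\omega) = -2\sigma_B^2\sigma_m^2/(\sigma_B^2+\omega\sigma_m^2)^3\le 0$.

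To conclude, I combine the three facts in the right order. Fix $\omega_0,\omega_1\in[0,1)$ and $\lambda\in[0,1]$, and set $\omega_\lambda=\lambda\omega_0+(1-\lambda)\omega_1$. Loewner concavity of $\bm{M}$ gives $\bm{M}(\omega_\lambda)\succeq\lambda\bm{M}(\omega_0)+(1-\lambda)\bm{M}(\omega_1)$; applying that $F$ is nonincreasing and then that it is convex yields
\[ g(\omega_\lambda)=F(\bm{M}(\omega_\lambda))\le F\big(\lambda\bm{M}(\omega_0)+(1-\lambda)\bm{M}(\omega_1)\big)\le\lambda g(\omega_0)+(1-\lambda)g(\omega_1), \]
which is exactly convexity of $g$ on $[0,1)$. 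The only real subtlety, and the step I would be most careful with, is using the two monotonicity/convexity properties of $F$ in the correct order and with matching directions: antitonicity must be invoked before convexity, precisely because the inner matrix function $\bm{M}$ is concave rather than affine. A fully elementary alternative is to clear denominators in \eqref{eq: cost function trace}, obtaining $g$ as a ratio of a quadratic over the cubic $(1-\omega)D(\omega)$ with $D(\omega)=\omega\sigma_A^2+(1-\omega)(\sigma_B^2+\omega\sigma_m^2)$, and to verify $g''(\omega)\ge 0$ directly; this is routine but far less transparent, so I would relegate it to a remark.
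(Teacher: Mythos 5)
Your proof is correct, and it takes a genuinely different route from the paper's. You exploit the structure $g = F \circ \bm{M}$, where $F(\cdot) = \trace\left[(\cdot)^{-1}\right]$ is convex (operator convexity of the inverse plus linearity and order-preservation of the trace) and antitone on the positive definite cone, and $\bm{M}(\omega) = (1-\omega)\bm{P}_A^{-1} + \frac{\omega}{\sigma_B^2+\omega\sigma_m^2}\underline{u}_{BA}\underline{u}_{BA}^\intercal$ is Loewner-concave because the scalar weight $\omega \mapsto \omega/(\sigma_B^2+\omega\sigma_m^2)$ is concave; the composition rule (antitone convex outer map, Loewner-concave inner map) then yields convexity, and you correctly flag that antitonicity must be applied before convexity. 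The paper instead works entirely with the scalar rational expression \eqref{eq: cost function trace}: it first proves $0 < r_A \le 1$ via a Schur-complement argument, locates the roots of the numerator and denominator quadratics ($a < b < 0 < 1 < c < d$), exhibits the partial fraction decomposition \eqref{eq: Partial fraction decomposition trace}, checks the signs of its coefficients so that each of the three terms is convex on $(b,1)$, and handles $r_A = 1$ as a separate case. Your argument is shorter, needs no root bookkeeping, no bound on $r_A$, and no case split; it also generalizes, since the same skeleton proves convexity of $\omega \mapsto J(\bm{P}_\text{SCI}(\omega))$ for any cost $J$ that is Loewner-increasing and convex on matrices, and even covers Lemma~\ref{lem: Convexity of h} once one notes that $\bm{M} \mapsto \left|\bm{M}^{-1}\right| = \exp(-\log\left|\bm{M}\right|)$ is convex by log-convexity. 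What the paper's heavier computation buys in exchange is the explicit decomposition \eqref{eq: Partial fraction decomposition trace} itself, which is reused in Section~\ref{sec: Discussion} to show that $\omega^*$ can be found analytically as a root of a degree-4 polynomial; your structural proof produces no such formula. One cosmetic caution if this were inserted into the text: you reuse the symbol $h$ for the scalar weight, which collides with the paper's $h$ in \eqref{eq: cost function determinant (2)}, so it should be renamed.
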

	\begin{proof}
		Let us first prove that $0 < r_A \le 1$. Consider again the orthonormal basis $\mathcal{B} = \left(\underline{u}_{BA}, \underline{u}_2, \dots, \underline{u}_n\right)$ containing $\underline{u}_{BA}$ and the orthogonal matrix $\bm{R} = \begin{bmatrix}\underline{u}_{BA} & \underline{u}_2 & \cdots & \underline{u}_n\end{bmatrix}$. In this basis, the covariance of the estimator of $A$ becomes:
		\begin{equation}
			\bm{R}^\intercal \bm{P}_A \bm{R} = \begin{bmatrix}
				\sigma_A^2 & \rho_2 \sigma_A\sigma_2 & \cdots & \rho_n \sigma_A\sigma_n \\
				\rho_2 \sigma_A\sigma_2 & \sigma_2^2 & * & * \\
				\vdots & * & \ddots & * & \\
				\rho_n \sigma_A\sigma_n & * & * & \sigma_n^2
			\end{bmatrix}
		\end{equation}
		where only the diagonal coefficients and the correlations with the first component have been labeled. With these notations,  $r_A$ is developed as:
		\begin{equation}\label{eq: ratio 2}
			r_A = \frac{\frac{1}{\sigma_A^2}\left\lVert\bm{P}_A \underline{u}_{BA}\right\rVert^2}{\trace{\bm{P}_A}} 
			= \frac{\sigma_A^2 + \sum_{i=2}^n \rho_i^2\sigma_i^2}{\sigma_A^2 + \sum_{i=2}^n \sigma_i^2}.
		\end{equation}
		Since the correlations satisfy $\left|\rho_i\right| \le 1$, the ratio satisfies $0 < r_A \le 1$.
		
		To prove the convexity of $g$, let us first assume that $r_A <1$. By putting all the terms in the same fraction, the cost function $g$ is a rational function:
		\begin{align*}g(\omega)
			&= \frac{\trace{\bm{P}_A}}{1-\omega}\frac{(1-r_A)\omega \sigma_A^2 + (1-\omega)(\sigma_B^2 + \omega \sigma_m^2)}{\omega \sigma_A^2 + (1-\omega)(\sigma_B^2 + \omega \sigma_m^2)} \\
			&= \frac{\trace{\bm{P}_A}}{1-\omega}\frac{\sigma_B^2 + [(1-r_A)\sigma_A^2-\sigma_B^2 + \sigma_m^2]\omega - \sigma_m^2\omega^2}{\sigma_B^2 + (\sigma_A^2-\sigma_B^2 + \sigma_m^2)\omega - \sigma_m^2\omega^2}.
		\end{align*}
		Let $P:\omega \mapsto \sigma_B^2 + (\sigma_A^2-\sigma_B^2 + \sigma_m^2)\omega - \sigma_m^2\omega^2$ and $Q:\omega \mapsto \sigma_B^2 + [(1-r_A)\sigma_A^2-\sigma_B^2 + \sigma_m^2]\omega - \sigma_m^2\omega^2$ be the polynomials of degree $2$ at the denominator and numerator. Since $P(0) = \sigma_B^2$ and $P(1) = \sigma_A^2$, $P$ has two roots $b$ and $d$ which satisfy: $b < 0 < 1 < d$. Similarly, by noting that $Q(\omega) = P(\omega) - r_A \sigma_A^2 \omega$, $Q(b) = -r_Ab\sigma_A^2 > 0$, $Q(1) = (1-r_A)\sigma_A^2 > 0$ and $Q(d) = -r_A d\sigma_A^2 < 0$, $Q$ has two roots $a$ and $c$ which satisfy:
		\begin{equation*}
			a < b < 0 < 1 < c < d.
		\end{equation*}
		The cost function is therefore:
		\begin{equation*}
			g(\omega) = -\frac{\trace{\bm{P}_A}(\omega - a)(\omega - c)}{(\omega - 1)(\omega - b)(\omega - d)}.
		\end{equation*}
		Then, using partial fraction decomposition:
		\begin{equation}\label{eq: Partial fraction decomposition trace}
			g(\omega) = \trace{\bm{P}_A}\left(\frac{A}{\omega - 1} + \frac{B}{\omega - b} + \frac{C}{\omega - d}\right)
		\end{equation}
		with:
		\begin{align*}
			A &= -\frac{(1-a)(1-c)}{(1-b)(1-d)}<0, & B &= -\frac{(b-a)(b-c)}{(b-1)(b-d)}>0, \\ C &= -\frac{(d-a)(d-c)}{(d-1)(d-b)} < 0.
		\end{align*}
		Since $\trace{\bm{P}_A} > 0$, the three terms in \eqref{eq: Partial fraction decomposition trace} are convex on $(b, 1)$. Therefore, since $b < 0$, the cost function $g$ is convex on $[0,1)$.
		
		Finally, if $r_A = 1$, then $c = 1$ which simplifies the expression of $g$. By the same logic, $g$ is convex on $(b,d)$, and thus on $[0,1]$. Note that $r_A= 1$ corresponds to the one-dimensional case.
	\end{proof}
	
	We are now in a position to apply Lemma~\ref{the: Condition necessary and sufficient} to the trace.
	\begin{theorem}\label{cor: Condition pertinent filter trace}
		The optimal SCI filter for the state of $A$ w.r.t. to the trace is pertinent if and only if:
		\begin{equation}\label{eq: Main result trace}
			\sigma_B^2 < r_A \sigma_A^2.
		\end{equation}
	\end{theorem}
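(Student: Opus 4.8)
The plan is to stack the two lemmas already proved: the convexity of the trace cost (Lemma~\ref{lem: Convexity of g}) feeds directly into the first-order pertinence criterion (Lemma~\ref{the: Condition necessary and sufficient}). Taking $J(\cdot) = \trace\cdot$, the map $f$ appearing in Lemma~\ref{the: Condition necessary and sufficient} is exactly $g(\omega) = \trace\bm{P}_\text{SCI}(\omega)$, which Lemma~\ref{lem: Convexity of g} asserts is convex on $[0,1)$. So the hypothesis of Lemma~\ref{the: Condition necessary and sufficient} holds, and the optimal SCI filter for the state of $A$ w.r.t. the trace is pertinent if and only if $g'(0) < 0$. The entire theorem therefore collapses to evaluating the sign of $g'(0)$ from the closed form \eqref{eq: cost function trace}.

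The only remaining work is to differentiate $g$ at $\omega = 0$. First I would set $T = \trace\bm{P}_A > 0$ and $D(\omega) = \omega\sigma_A^2 + (1-\omega)(\sigma_B^2 + \omega\sigma_m^2)$, noting $D(0) = \sigma_B^2 > 0$, and write $g(\omega) = \frac{T}{1-\omega}\,h(\omega)$ with $h(\omega) = 1 - \omega r_A\sigma_A^2 / D(\omega)$. Since $\frac{d}{d\omega}\frac{1}{1-\omega}$ equals $1$ at $\omega=0$ and $h(0)=1$, the product rule gives $g'(0) = T\,[h(0) + h'(0)]$. When differentiating $h$, the leading factor $\omega$ in the numerator kills every contribution of $D'(0)$ at $\omega = 0$, leaving $h'(0) = -r_A\sigma_A^2 / D(0) = -r_A\sigma_A^2/\sigma_B^2$. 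Hence $g'(0) = T\,(1 - r_A\sigma_A^2/\sigma_B^2)$.

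To conclude, since $T > 0$ and $\sigma_B^2 > 0$, the sign of $g'(0)$ matches that of $\sigma_B^2 - r_A\sigma_A^2$, so $g'(0) < 0$ is equivalent to $\sigma_B^2 < r_A\sigma_A^2$, i.e.\ to \eqref{eq: Main result trace}. Combined with the pertinence criterion from Lemma~\ref{the: Condition necessary and sufficient}, this yields the stated equivalence.

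There is no genuine obstacle once the two lemmas are available: the argument is a direct citation followed by a single derivative evaluation. The one point requiring a little care is computing $h'(0)$ correctly, namely observing that because $\omega$ multiplies the fraction in $h$, the derivative of $D$ never enters at $\omega = 0$ and only $D(0) = \sigma_B^2$ survives. As a sanity check I would also recall that $0 < r_A \le 1$ (shown inside the proof of Lemma~\ref{lem: Convexity of g}), so $r_A\sigma_A^2 \le \sigma_A^2$, which makes \eqref{eq: Main result trace} strictly stronger than the necessary condition $\sigma_B^2 < \sigma_A^2$ of Corollary~\ref{cor: Necessary condtion}, exactly as expected.
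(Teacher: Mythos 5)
Your proposal is correct and follows essentially the same route as the paper: convexity from Lemma~\ref{lem: Convexity of g}, the first-order criterion of Lemma~\ref{the: Condition necessary and sufficient}, and the evaluation $g'(0) = \trace{\bm{P}_A}\left(1 - r_A\sigma_A^2/\sigma_B^2\right)$, whose sign gives \eqref{eq: Main result trace}. The only difference is that the paper states this derivative without computation, while you carry out the product-rule evaluation explicitly (and correctly, including the observation that $D'(0)$ drops out because of the factor $\omega$).
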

	\begin{proof}
		Since $g$ is convex on $[0, 1)$ as stated in Lemma~\ref{lem: Convexity of g}, by Lemma~\ref{the: Condition necessary and sufficient}, the optimal SCI filter is pertinent if and only if $g'(0) < 0$. The claim then follows from $g'(0) = \trace{\bm{P}_A} \left(1 - r_A {\sigma_A^2}/{\sigma_B^2}\right)$.
	\end{proof}
	
	\subsection{Particular case of the determinant}
	
	In this paragraph, consider $J(\cdot) = \left|\cdot\right|$ and let $h : \omega \mapsto \left|\bm{P}_\text{SCI}(\omega)\right|$ be the cost function to be optimized. According to \eqref{eq: Optimal filter covariance}, the cost function is:
	\begin{equation}\label{eq: cost function determinant (1)}
		h(\omega) = \frac{\left|\bm{P}_A\right|}{(1-\omega)^n}\left|\bm{I} - \frac{\omega \bm{P}_A^{1/2}\underline{u}_{BA}\underline{u}_{BA}^\intercal\bm{P}_A^{1/2}}{\omega \sigma_A^2 + (1-\omega)(\sigma_B^2 + \omega \sigma_m^2)}\right|
	\end{equation}
	where $\bm{P}_A^{1/2}$ denotes the square root of $\bm{P}_A$ and $n$ is the dimension of the state.
	Using that $\left|I + \underline{u}\underline{v}^\intercal\right| = 1 + \underline{u}^\intercal\underline{v}$, \eqref{eq: cost function determinant (1)} becomes:
	\begin{equation}\label{eq: cost function determinant (2)}
		h(\omega) = \frac{\left|\bm{P}_A\right|}{(1-\omega)^{n-1}}\frac{\sigma_B^2 + \omega \sigma_m^2}{\omega \sigma_A^2 + (1-\omega)[\sigma_B^2 + \omega \sigma_m^2]}.
	\end{equation}
	
	As for the trace, the convexity of $h$ is proven in the following result.
	\begin{lemma}\label{lem: Convexity of h}
		The cost function $h$ is convex on $[0,1)$.
	\end{lemma}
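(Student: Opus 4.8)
The plan is to prove that $\log h$ is convex on $[0,1)$; since $h>0$ there and the exponential is convex and increasing, convexity of $h=\exp(\log h)$ follows immediately. Rather than differentiate the rational expression \eqref{eq: cost function determinant (2)} twice, I would work with the information (inverse-covariance) form of the filter appearing in \eqref{eq: Optimal filter both}, writing $h(\omega)=\left|\bm{M}(\omega)\right|^{-1}$ where
\[
\bm{M}(\omega)=(1-\omega)\bm{P}_A^{-1}+\frac{\omega}{\sigma_B^2+\omega\sigma_m^2}\,\underline{u}_{BA}\underline{u}_{BA}^\intercal .
\]
Because $1-\omega>0$ on $[0,1)$ we have $\bm{M}(\omega)\succ\bm{0}$, so $\log h(\omega)=-\log\left|\bm{M}(\omega)\right|$ is well defined, and it suffices to show that $\omega\mapsto\log\left|\bm{M}(\omega)\right|$ is concave.

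The key structural observation is that $\bm{M}$ is concave in the Loewner order. Indeed, $(1-\omega)\bm{P}_A^{-1}$ is affine in $\omega$, while the rank-one term equals $c(\omega)\,\underline{u}_{BA}\underline{u}_{BA}^\intercal$ with scalar $c(\omega)=\omega/(\sigma_B^2+\omega\sigma_m^2)$. A one-line computation gives $c''(\omega)=-2\sigma_B^2\sigma_m^2/(\sigma_B^2+\omega\sigma_m^2)^3<0$, so $c$ is concave on $[0,1)$; multiplying a concave scalar by the fixed positive semidefinite matrix $\underline{u}_{BA}\underline{u}_{BA}^\intercal$ preserves concavity in the Loewner order, and the sum of an affine map and a Loewner-concave map is Loewner-concave. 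Hence, for $\lambda\in[0,1]$ and $\omega_1,\omega_2\in[0,1)$, $\bm{M}(\lambda\omega_1+(1-\lambda)\omega_2)\succeq \lambda\bm{M}(\omega_1)+(1-\lambda)\bm{M}(\omega_2)$.

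To finish, I would invoke two standard facts about $\log\left|\cdot\right|$ on positive definite matrices: it is concave, and it is nondecreasing in the Loewner order, i.e. $\bm{0}\prec\bm{X}\preceq\bm{Y}\Rightarrow\log\left|\bm{X}\right|\le\log\left|\bm{Y}\right|$. Chaining these with the Loewner-concavity of $\bm{M}$ yields $\log\left|\bm{M}(\lambda\omega_1+(1-\lambda)\omega_2)\right|\ge\log\left|\lambda\bm{M}(\omega_1)+(1-\lambda)\bm{M}(\omega_2)\right|\ge\lambda\log\left|\bm{M}(\omega_1)\right|+(1-\lambda)\log\left|\bm{M}(\omega_2)\right|$, where the first inequality uses monotonicity (the argument is Loewner-larger) and the second uses concavity. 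Thus $\log\left|\bm{M}\right|$ is concave, $\log h$ is convex, and therefore $h$ is convex on $[0,1)$.

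The delicate point is the composition step: one must use that $\log\left|\cdot\right|$ is \emph{both} concave and Loewner-monotone, since either property alone would be insufficient. It is worth recording why I avoid the elementary route used for the trace in Lemma~\ref{lem: Convexity of g}: a partial-fraction decomposition of \eqref{eq: cost function determinant (2)} based on the roots $b<0<1<d$ of the denominator produces, besides the higher-order pole terms in $(\omega-1)^{-k}$, a residue at the external pole $\omega=d$ whose sign alternates with the parity of $n$, so a term-by-term convexity argument breaks down for even $n$. The log-determinant argument sidesteps this entirely and treats all dimensions $n$ uniformly.
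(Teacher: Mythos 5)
Your proof is correct, and it takes a genuinely different route from the paper's. The paper stays entirely at the scalar level: starting from \eqref{eq: cost function determinant (2)}, it factors the quadratic denominator through its roots $a<0<1<b$, rewrites $h$ as the sum of the two terms in \eqref{eq: Partial fraction decomposition determinant}, each of the form $\text{const}\cdot\bigl[(\omega-\text{root})(1-\omega)^{n-1}\bigr]^{-1}$, checks the signs of the two constants, and observes that each term is convex on $[0,1)$. You instead prove the stronger statement that $h$ is log-convex, working in the information form: $\bm{M}(\omega)=\bm{P}_\text{SCI}(\omega)^{-1}$ is Loewner-concave (an affine part plus a concave scalar multiplying a fixed positive semi-definite rank-one matrix), and composing with $\log\left|\cdot\right|$, which is both concave and Loewner-nondecreasing on positive definite matrices, gives concavity of $\omega\mapsto\log\left|\bm{M}(\omega)\right|$; every fact you invoke is standard and your chaining of monotonicity and concavity is exactly right. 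What your route buys: it is uniform in $n$, needs no root computations, delivers log-convexity (strictly stronger than convexity), and the same template also proves Lemma~\ref{lem: Convexity of g}, since $\bm{X}\mapsto\trace{\bm{X}^{-1}}$ is convex and Loewner-nonincreasing, so $g(\omega)=\trace{\bm{M}(\omega)^{-1}}$ is convex by the identical composition argument --- a unification the paper does not exploit. What the paper's route buys: the explicit decomposition \eqref{eq: Partial fraction decomposition determinant} is reused in Section~\ref{sec: Discussion} to compute $\omega^*$ analytically (solving $h'(\omega)=0$ reduces to a cubic), a by-product your proof does not provide. Two small remarks. First, your closing criticism of the elementary route applies only to the \emph{fully expanded} partial-fraction decomposition (simple poles plus the powers $(1-\omega)^{-k}$, $k=1,\dots,n-1$), where the residue at the external root indeed has sign $(-1)^n$; the paper sidesteps this by not expanding, keeping $(1-\omega)^{-(n-1)}$ grouped with each simple-pole factor, after which both terms are positive and log-convex regardless of parity, so the elementary route does work in that grouped form. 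Second, your strict inequality $c''(\omega)<0$ presumes $\sigma_m^2>0$; if $\sigma_m^2=0$ then $c$ is linear and your argument goes through unchanged (the paper's proof likewise implicitly assumes $\sigma_m^2>0$, since it divides by $\sigma_m^2$).
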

	\begin{proof}
		This proof is very similar to the proof of Lemma~\ref{lem: Convexity of g}.
		First, let $P:\omega \mapsto \sigma_B^2 + (\sigma_A^2-\sigma_B^2 + \sigma_m^2)\omega - \sigma_m^2\omega^2$. As $P(0) > 0$, $P(1) > 0$ and $P(-\sigma_B^2/\sigma_m^2) < 0$, the cost function $h$ is a rational function which can be expressed as follows.
		\begin{equation*}
			h(\omega) = -\frac{\left|\bm{P}_A\right|(\omega + \sigma_B^2/\sigma_m^2)}{(1-\omega)^{n-1}(\omega - a)(\omega - b)}
		\end{equation*}
		where the zeros of $P$, denoted $a$ and $b$, satisfy:
		$$-\sigma_B^2/\sigma_m^2 < a < 0 < 1 < b.$$
		Then, using partial fraction decomposition:
		\begin{equation}\label{eq: Partial fraction decomposition determinant}
			h(\omega) = \frac{\left|\bm{P}_A\right|A}{(\omega - a)(1-\omega)^{n-1}} + \frac{\left|\bm{P}_A\right|B}{(\omega - b)(1-\omega)^{n-1}}
		\end{equation}
		with:
		\begin{align*}
			A &= \frac{-(a+\sigma_B^2/\sigma_m^2)}{(a-b)}>0, & B &= \frac{-(b+\sigma_B^2/\sigma_m^2)}{(b-a)}<0.
		\end{align*}
		Both terms in \eqref{eq: Partial fraction decomposition determinant} are convex on $[0,1)$.
	\end{proof}
	
	We are now in a position to apply Lemma~\ref{the: Condition necessary and sufficient} to the determinant.
	\begin{theorem}\label{cor: Condition pertinent filter determinant}
		The optimal SCI filter for the state of $A$ w.r.t. to the determinant is pertinent if and only if:
		\begin{equation}\label{eq: Main result determinant}
			\sigma_B^2 < \frac{1}{n}\sigma_A^2.
		\end{equation}
	\end{theorem}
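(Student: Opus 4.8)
The plan is to mirror the proof of Theorem~\ref{cor: Condition pertinent filter trace}: having established in Lemma~\ref{lem: Convexity of h} that $h$ is convex on $[0,1)$, I can invoke Lemma~\ref{the: Condition necessary and sufficient} directly, which reduces the question of pertinence to the single sign condition $h'(0) < 0$. Everything then comes down to computing the derivative of the explicit rational expression \eqref{eq: cost function determinant (2)} at $\omega = 0$ and checking when it is negative.

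First I would write $h(\omega) = \left|\bm{P}_A\right|\, u(\omega)/v(\omega)$ with $u(\omega) = \sigma_B^2 + \omega\sigma_m^2$ and $v(\omega) = (1-\omega)^{n-1}w(\omega)$, where $w(\omega) = \omega\sigma_A^2 + (1-\omega)(\sigma_B^2 + \omega\sigma_m^2)$ is the denominator polynomial. Evaluating at $\omega=0$ gives $u(0)=v(0)=\sigma_B^2$, which confirms $h(0)=\left|\bm{P}_A\right|=J(\bm{P}_A)$ as it must. Then I would compute the three elementary derivatives $u'(0)=\sigma_m^2$, $w'(0)=\sigma_A^2-\sigma_B^2+\sigma_m^2$, and crucially $v'(0) = -(n-1)w(0) + w'(0) = \sigma_A^2 - n\sigma_B^2 + \sigma_m^2$, the extra term $-(n-1)\sigma_B^2$ coming from differentiating $(1-\omega)^{n-1}$.

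Applying the quotient rule, $h'(0) = \left|\bm{P}_A\right|\,[u'(0)v(0)-u(0)v'(0)]/v(0)^2$. The numerator $u'(0)v(0)-u(0)v'(0)$ simplifies, after the $\sigma_B^2\sigma_m^2$ terms cancel, to $\sigma_B^2(n\sigma_B^2-\sigma_A^2)$, while $v(0)^2=\sigma_B^4>0$. Hence $h'(0)=\left|\bm{P}_A\right|(n\sigma_B^2-\sigma_A^2)/\sigma_B^2$, and since $\left|\bm{P}_A\right|>0$ and $\sigma_B^2>0$, the sign of $h'(0)$ equals that of $n\sigma_B^2-\sigma_A^2$. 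Thus $h'(0)<0$ if and only if $\sigma_B^2 < \tfrac{1}{n}\sigma_A^2$, which is exactly \eqref{eq: Main result determinant}.

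There is no real conceptual obstacle here; the computation is routine once Lemmas~\ref{the: Condition necessary and sufficient} and~\ref{lem: Convexity of h} are in hand. The one point that deserves attention is the bookkeeping of the factor $n$: it is introduced solely by the derivative of the $(1-\omega)^{n-1}$ term in $v$, and it is precisely this contribution that turns the trace threshold $r_A\sigma_A^2$ into the dimension-dependent threshold $\sigma_A^2/n$. I would double-check this step, since a miscount of the exponent $n-1$ would change the stated constant.
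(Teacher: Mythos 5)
Your proof is correct and takes essentially the same route as the paper: both invoke Lemma~\ref{lem: Convexity of h} together with Lemma~\ref{the: Condition necessary and sufficient} to reduce pertinence to the sign condition $h'(0)<0$, and both arrive at $h'(0) = \left|\bm{P}_A\right|\left(n - \sigma_A^2/\sigma_B^2\right)$, which is negative exactly when $\sigma_B^2 < \sigma_A^2/n$. The paper merely states this derivative without showing the computation, whereas you carry out the quotient rule explicitly; your bookkeeping of the $-(n-1)$ contribution from differentiating $(1-\omega)^{n-1}$ is exactly right and reproduces the paper's value.
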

	\begin{proof}
		Since $h$ is convex on $[0,1)$ as stated in Lemma~\ref{lem: Convexity of h}, the optimal SCI filter is pertinent if and only if $h'(0) < 0$. The claim then follows from $h'(0) = \left|\bm{P}_A\right| \left(n - {\sigma_A^2}/{\sigma_B^2}\right)$.
	\end{proof}
	
	\section{Discussion and numerical simulations}\label{sec: Discussion}
	
	\begin{figure}[t]
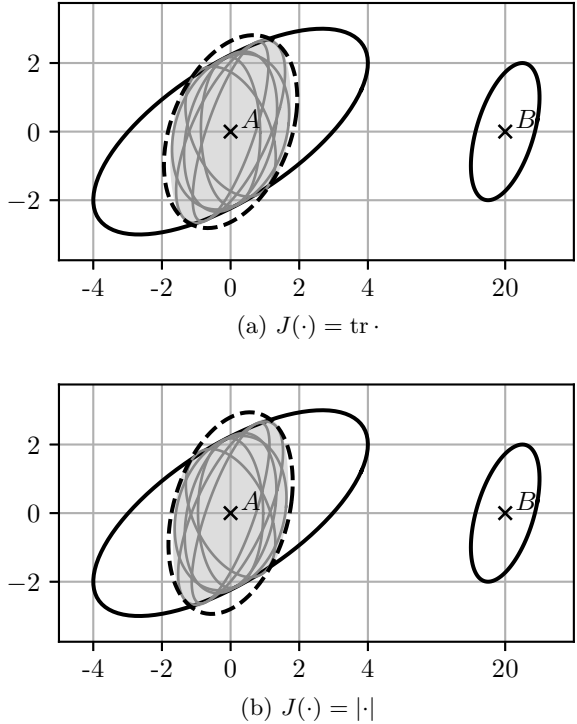

		\begin{subfigure}{\linewidth}
			\centering
			{\input{estimate_with_true_covariances_optimal_trace.tex}}
			\caption{$J(\cdot) = \trace{\cdot}$}
		\end{subfigure}
		\begin{subfigure}{\linewidth}
			\centering
			{\input{estimate_with_true_covariances_optimal_determinant.tex}}
			\caption{$J(\cdot) = \left|\cdot\right|$}
		\end{subfigure}
		\caption{Example of the optimal SCI filters for two different cost functions $J$. The black solid ellipses represent the covariance of the estimators $\bm{P}_A$ and $\bm{P}_B$. The dashed ellipse represents the covariance of the optimal SCI filter $\bm{P}_\text{SCI}^*$. The grey solid ellipses represent possible covariances $\bm{\tilde P}_F$ obtained by \eqref{eq: MSE} with $\underline{w}_F = \underline{w}_\text{SCI}(\omega^*)$ and different $\bm{\tilde P} \in \boldsymbol{\mathcal{P}}$, and the grey area their convex hull. In both figures: $P_A = \left[16, 8; 8, 9\right]$, $P_B = \left[1, 1; 1, 4\right]$, and $\sigma_m^2 = 1$. The optimal parameters are $\omega^* \approx 0.28$ for the trace and $\omega^* \approx 0.36$ for the determinant.}
		\label{fig: Example of optimal filtering}
	\end{figure}
	
	Illustrations of the optimal SCI filters for the trace and the determinant are shown in Fig.~\ref{fig: Example of optimal filtering}. In this figure, the consistency of the filter has been illustrated by generating several possible $\bm{\tilde P} \in \boldsymbol{\mathcal{P}}$ and plotting the resulting MSE given by \eqref{eq: MSE}. As observed, the two optimal SCI filters are different. This means that they depend on the cost function which is a first difference with the usual case where $\bm{\tilde P}$ is known. 
	
	The main implication of Theorem~\ref{the: Optimal filter} and Corollary~\ref{cor: Necessary condtion} is that it is not always possible to improve an estimator using SCI. Improvements require that the \emph{helping} agent, here Agent $B$, has a sufficiently good precision (w.r.t. Agent $A$) in the direction of the measure, {i.e.}, that $\sigma_B^2 < \sigma_A^2$ as stated in Corollary~\ref{cor: Necessary condtion}.
	Furthermore, this condition is only necessary, harder constraints should be satisfied in practice, they depend on the cost function $J$. It can happen that neither agent can improve its estimator, for example when considering the trace, if $r_A\sigma_A^2 < \sigma_B^2 < \sigma_A^2$ as shown in Theorem~\ref{cor: Condition pertinent filter trace}.
	This is an important difference from the usual case. If $\bm{\tilde P}$ is known, it is \emph{almost} always possible to improve the estimator: the optimal linear filter was recalled in \eqref{eq: Optimal filter with given correlation} (Section~\ref{sec: Problem statement}). In this case, even extremely poor precision on the helping agent provides (tiny) improvements.
	
	The expression of the SCI filter \eqref{eq: Optimal filter both} involves the whole statistic of the estimate of Agent $A$ but only the variance of the estimate of Agent $B$ in the direction of the measurement. This asymmetry reduces the communication cost of the filter: the agents only need to send their estimate and their variance in the direction of the measurement (but not their full covariance). The asymmetry is due to the linearization assumption.
	
	Moreover, for the two commonly used cost functions, the trace and the determinant, the necessary and sufficient conditions given in Theorem~\ref{cor: Condition pertinent filter trace} and Theorem~\ref{cor: Condition pertinent filter determinant} allow to check efficiently the pertinence of the SCI filter before doing the measurement. In practice, this criterion can save energy by avoiding useless measurements. Furthermore, the precision of the measurement $\sigma_m^2$ does not appear in these conditions \eqref{eq: Main result trace} or \eqref{eq: Main result determinant}. This means that if the property is not satisfied, even a perfect measurement cannot improve the estimate. However, the variance $\sigma_m^2$ does affect the improvement (if it occurs). As \eqref{eq: Optimal filter both} suggests, the larger $\sigma_m^2$, the smaller the improvement.
	In addition, it seems that the condition for the existence of a pertinent filter becomes more difficult to satisfy as the dimension increases: the right-hand sides of \eqref{eq: Main result determinant} and \eqref{eq: Main result trace} tend to $0$ as the dimension increases (except for the trace when the errors are perfectly correlated in almost all directions).
	
	Finally, it is worth mentioning that the optimal parameter $\omega^*$ can be calculated analytically for the cases of the trace and the determinant. Since the cost functions $g$ and $h$ are convex, the minimum is reached either at $0$, at $1$ (only possible when $r_A = 1$), or when $g'(\omega) = 0$ or $h'(\omega) = 0$. From their decompositions \eqref{eq: Partial fraction decomposition trace} and \eqref{eq: Partial fraction decomposition determinant}, solving $g'(\omega) = 0$ requires finding the roots of a polynomial of degree $4$, and solving $h'(\omega) = 0$ requires finding the roots of a polynomial of degree $3$, both of which can be done analytically.

	\section{Concluding remarks}\label{sec: Conclusion}
	
	The SCI filter is a candidate for the optimal filtering problem. It is based on the linearization of the observation. Under this linearization assumption, we have been able to fully characterize the usefulness of the filter. The validity of this assumption may be evaluated by computing the second-order terms. The extension of this work to nonlinear observation remains open. To better fit the nonlinearities, a possible improvement could be the use of the Unscented Transform \cite{julier2004unscented}.
	Furthermore, the optimal SCI filter proposed here may not be the solution to Problem~\ref{pro: Main problem}: a better linear filter may exist outside the family of SCI filters. We conjecture, after numerous simulations, that the optimal SCI filter is indeed the solution to Problem~\ref{pro: Main problem}. The conjecture is based on the facts that SCI is the natural extension of CI when the errors of the estimators have independent components, and that CI is optimal for the problem it considers \cite{reinhardt2015minimum}.
	
	Finally, we only studied the integration of one distance measurement. Future works will focus on the simultaneous integration of distance measurements with multiple helping agents. They will consider filters such as:
	\begin{align*}
		\underline{\pmb{\hat x}}_F = \bm{K}\underline{\pmb{\hat x}}_A + \bm{L}_1\underline{\pmb{\hat x}}_1 + \bm{z}_1 \underline{w}_1 + \dots  + \bm{L}_m\underline{\pmb{\hat x}}_m + \bm{z}_m \underline{w}_m.
	\end{align*}
	Most of the concepts introduced in this paper can be adapted, but as \cite{ajgl2018fusion} explains, CI is not optimal for more than two estimators, therefore SCI should also be suboptimal.
	
	\bibliographystyle{plain}
	\bibliography{references}
	
\end{document}